\definecolor{rj}{RGB}{0, 150, 0}
\definecolor{mdr}{RGB}{200, 0, 0}
\definecolor{ho}{RGB}{0, 50, 150}
\acrodef{IR}{Information Retrieval}
\acrodef{LTR}{Learning to Rank}
\acrodef{ARP}{Average Relevance Position}
\acrodef{DCG}{Discounted Cumulative Gain}
\acrodef{EM}{Expectation Maximization}
\acrodef{CTR}{Click-Through-Rate}
\acrodef{IPS}{Inverse-Propensity-Scoring}
\acrodef{LogOpt}{Logging-Policy Optimization Algorithm}
\acrodef{RCTR}{Relevant-Click-Through-Rate}
\acrodef{DBGD}{Dueling Bandit Gradient Descent}
\acrodef{COLTR}{Counterfactual Online Learning to Rank}
\acrodef{PDGD}{Pairwise Differentiable Gradient Descent}
\acrodef{NRCTR}{Normalized RCTR}
\acrodef{NDCG}{Normalized DCG}
\author{Harrie Oosterhuis}
\affiliation{%
	\institution{Radboud University}
	\city{Nijmegen}
	\country{The Netherlands}
}
\email{harrie.oosterhuis@ru.nl}
\author{Maarten de Rijke}
\affiliation{
 \institution{University of Amsterdam \& Ahold Delhaize}
 \city{Amsterdam}
 \country{The Netherlands}
}
\email{derijke@uva.nl}
\title{Unifying Online and Counterfactual Learning to Rank}
\begin{document}

\begin{abstract}
Optimizing ranking systems based on user interactions is a well-studied problem.
State-of-the-art methods for optimizing ranking systems based on user interactions are divided into online approaches -- that learn by directly interacting with users -- and counterfactual approaches -- that learn from historical interactions.
Existing online methods are hindered without online interventions and thus should not be applied counterfactually.
Conversely, counterfactual methods cannot directly benefit from online interventions.

We propose a novel intervention-aware estimator for both counterfactual and online \ac{LTR}.
With the introduction of the intervention-aware estimator, we aim to bridge the online/counterfactual \ac{LTR} division as it is shown to be highly effective in both online and counterfactual scenarios.
The estimator corrects for the effect of position bias, trust bias, and item-selection bias by using corrections based on the behavior of the logging policy and on online interventions: changes to the logging policy made during the gathering of click data.
Our experimental results, conducted in a semi-synthetic experimental setup, show that, unlike existing counterfactual \ac{LTR} methods, the intervention-aware estimator can greatly benefit from online interventions.
\end{abstract}

\begin{CCSXML}
	<ccs2012>
	<concept>
	<concept_id>10002951.10003317.10003338.10003343</concept_id>
	<concept_desc>Information systems~Learning to rank</concept_desc>
	<concept_significance>500</concept_significance>
	</concept>
	</ccs2012>
\end{CCSXML}

\maketitle

\acresetall

\section{Introduction}
\label{sec:intro}

Ranking systems form the basis for most search and recommendation applications~\citep{liu2009learning}.
As a result, the quality of such systems can greatly impact the user experience, thus it is important that the underlying ranking models perform well.
The \ac{LTR} field considers methods to optimize ranking models. 
Traditionally this was based on expert annotations.
Over the years the limitations of expert annotations have become apparent; some of the most important ones are:
\begin{enumerate*}[label=(\roman*)]
\item they are expensive and time-consuming to acquire~\citep{qin2013introducing, Chapelle2011};
\item in privacy-sensitive settings expert annotation is unethical, e.g., in email or private document search~\citep{wang2018position}; and
\item often expert annotations appear to disagree with actual user preferences~\citep{sanderson2010}.
\end{enumerate*}

User interaction data solves some of the problems with expert annotations:
\begin{enumerate*}[label=(\roman*)]
\item interaction data is virtually free for systems with active users;
\item it does not require experts to look at potentially privacy-sensitive content;
\item interaction data is indicative of users' preferences.
\end{enumerate*}
For these reasons, interest in \ac{LTR} methods that learn from user interactions has increased in recent years.
However, user interactions are a form of implicit feedback and generally also affected by other factors than user preference~\citep{joachims2017accurately}.
Therefore, to be able to reliably learn from interaction data, the effect of factors other than preference has to be corrected for.
In clicks on rankings three prevalent factors are well known:
\begin{enumerate*}[label=(\roman*)]
\item \emph{position bias}: users are less likely to examine, and thus click, lower ranked items~\citep{craswell2008experimental};
\item \emph{item-selection bias}: users cannot click on items that are not displayed~\citep{ovaisi2020correcting, oosterhuis2020topkrankings}; and
\item \emph{trust bias}: because users trust the ranking system, they are more likely to click on highly ranked items that they do not actually prefer~\citep{agarwal2019addressing, joachims2017accurately}.
\end{enumerate*}
As a result of these biases, which ranking system was used to gather clicks can have a substantial impact on the clicks that will be observed.
Current \ac{LTR} methods that learn from clicks can be divided into two families:
\emph{counterfactual approaches}~\citep{joachims2017unbiased} -- that learn from historical data, i.e., clicks that have been logged in the past -- and \emph{online approaches}~\citep{yue2009interactively} -- that can perform interventions, i.e., they can decide what rankings will be shown to users.
Recent work has noticed that some counterfactual methods can be applied as an online method~\citep{jagerman2019comparison}, or vice versa~\citep{zhuang2020counterfactual, ai2020unbiased}.
Nonetheless, every existing method was designed for either the online or counterfactual setting, never both.

In this work, we propose a novel estimator for both counterfactual and online \ac{LTR} from clicks: the \emph{intervention-aware estimator}.
The intervention-aware estimator builds on ideas that underlie the latest existing counterfactual methods: the policy-aware estimator~\citep{oosterhuis2020topkrankings} and the affine estimator~\citep{vardasbi2020trust}; and expands them to consider the effect of online interventions.
It does so by considering how the effect of bias is changed by an intervention, and utilizes these differences in its unbiased estimation.
As a result, the intervention-aware estimator is both effective when applied as a counterfactual method, i.e., when learning from historical data, and as an online method where online interventions lead to enormous increases in efficiency.
In our experimental results the intervention-aware estimator is shown to reach state-of-the-art \ac{LTR} performance in both online and counterfactual settings, and it is the only method that reaches top-performance in both settings.

The main contributions of this work are:
\begin{enumerate}[align=left,leftmargin=*]
\item A novel intervention-aware estimator that corrects for position bias, trust bias, item-selection bias, and the effect of online interventions.
\item An investigation into the effect of online interventions on state-of-the-art counterfactual and online \ac{LTR} methods.
\end{enumerate}

\section{Interactions with Rankings}
\label{sec:userinteractions}

The theory in this paper assumes that three forms of interaction bias occur: position bias, item-selection bias, and trust bias.

\emph{Position bias} occurs because users only click an item after examining it, and users are more likely to examine items displayed at higher ranks~\citep{craswell2008experimental}.
Thus the rank (a.k.a.\ position) at which an item is displayed heavily affects the probability of it being clicked.
We model this bias using $P(E = 1 \mid k)$: the probability that an item $d$ displayed at rank $k$ is examined by a user $E$~\citep{wang2018position}.

\emph{Item-selection bias} occurs when some items have a zero probability of being examined in some displayed rankings~\citep{ovaisi2020correcting}.
This can happen because not all items are displayed to the user, or if the ranked list is so long that no user ever considers the entire list.
We model this bias by stating:
$
\exists k, \forall k', \, (k' > k \rightarrow P(E = 1 \mid k') = 0),
$
i.e., there exists a rank $k$ such that items ranked lower than $k$ have no chance of being examined.
The distinction between position bias and item-selection bias is important because some methods can only correct for the former if the latter is not present~\citep{oosterhuis2020topkrankings}.

Finally, \emph{trust bias} occurs because users trust the ranking system and, consequently, are more likely to perceive top ranked items as relevant even when they are not~\citep{joachims2017accurately}.
We model this bias using: $P(C = 1 \mid k, R, E)$: the probability of a click conditioned on the displayed rank $k$, the relevance of the item $R$, and examination $E$.

To combine these three forms of bias into a single click model, we follow \citet{agarwal2019addressing} and write:
\begin{equation}
\begin{split}
P(C&=1 \mid d, k, q)
\\
&= P(E = 1 \mid k)\big(P(C = 1 \mid k, R = 0, E=1)P(R=0 \mid d, q)\\
 &\qquad\qquad\quad\,\, + P(C = 1 \mid k, R = 1, E=1)P(R=1 \mid d, q)\big),
\end{split}
\label{eq:click-probability}
\end{equation}
where $P(R=1 \mid d, q)$ is the probability that an item $d$ is deemed relevant w.r.t.\ query $q$ by the user.
An analysis on real-world interaction data performed by \citet{agarwal2019addressing}, showed that this model better captures click behavior than models that only capture position bias~\citep{wang2018position} on search services for retrieving cloud-stored files and emails.

To simplify the notation, we follow \citet{vardasbi2020trust} and adopt:
\begin{equation}
\begin{split}
\alpha_k &= P(E = 1 \mid k)\big(P(C = 1 \mid k, R = 1, E=1)
\\ & \phantom{= P(E = 1 \mid k)\big(P} -  P(C = 1 \mid k, R = 0, E=1)\big),
\\
\beta_k &= P(E = 1 \mid k)P(C = 1 \mid k, R = 0, E=1).
\end{split}
\end{equation}
This results in a compact notation for the click probability \eqref{eq:click-probability}:
\begin{equation}
P(C=1 \mid d, k, q) = \alpha_k P(R=1 \mid d, q) + \beta_k.
\label{eq:clickmodelused}
\end{equation}
For a single ranking $y$, let $k$ be the rank at which item $d$ is displayed in $y$; we denote $\alpha_k = \alpha_{d,y}$ and $\beta_k = \beta_{d,y}$.
This allows us to specify the click probability conditioned on a ranking $y$:
\begin{equation}
P(C=1 \mid d, y, q) = \alpha_{d,y} P(R=1 \mid d, q) + \beta_{d,y}.
\label{eq:trustbiasmodel}
\end{equation}
Finally, let $\pi$ be a ranking policy used for logging clicks, where $\pi(y \mid q)$ is the probability of $\pi$ displaying ranking $y$ for query $q$, then the click probability conditioned on $\pi$ is:
\begin{equation}
P(C=1 \mid d, \pi, q) = \sum_{y} \pi(y \mid  q) \mleft( \alpha_{d,y} P(R=1 \mid d, q) + \beta_{d,y} \mright).
\label{eq:clickpolicy}
\end{equation}
The proofs in the remainder of this paper will assume this model of click behavior.

\section{Background}

In this section we cover the basics on \ac{LTR} and counterfactual \ac{LTR}.

\subsection{Learning to Rank}
\label{sec:background:LTR}

The field of \ac{LTR} considers methods for optimizing ranking systems w.r.t. ranking metrics.
Most ranking metrics are additive w.r.t.\ documents; let $P(q)$ be the probability that a user-issued query is query $q$, then the metric reward $\mathcal{R}$ commonly has the form:
\begin{equation}
\mathcal{R}(\pi)
=
\sum_{q} P(q) \sum_{d \in D_q} \lambda(d \mid D_q, \pi, q) P(R = 1 \mid d, q).
\label{eq:truereward}
\end{equation}
Here, the $\lambda$ function scores each item $d$ depending on how $\pi$ ranks $d$ when given  the preselected item set $D_q$; $\lambda$ can be chosen to match a desired metric, for instance, the common \ac{DCG} metric~\citep{jarvelin2002cumulated}:
\begin{equation}
\lambda_{\text{DCG}}(d \mid D_q, \pi, q)
=
\sum_y \pi(y \mid q) \mleft( \log_2(\text{rank}(d \mid y) + 1) \mright)^{-1}.
\label{eq:dcglambda}
\end{equation}
Supervised \ac{LTR} methods can optimize $\pi$ to maximize $\mathcal{R}$ if relevances $P(R = 1 \,|\, d, q)$ are known~\citep{wang2018lambdaloss, liu2009learning}. However in practice, finding these relevance values is not straightforward.

\subsection{Counterfactual Learning to Rank}
\label{sec:background:counterfactual}

Over time, limitations of the supervised \ac{LTR} approach have become apparent. 
Most importantly, finding accurate relevance values $P(R = 1 \mid d, q)$ has proved to be impossible or infeasible in many practical situations~\citep{wang2016learning}.
As a solution, \ac{LTR} methods have been developed that learn from user interactions instead of relevance annotations.
Counterfactual \ac{LTR} concerns approaches that learn from historical interactions.
Let $\mathcal{D}$ be a set of collected interaction data over $T$ timesteps; for each timestep $t$ it contains the user issued query $q_t$, the logging policy $\bar{\pi}_t$ used to generate the displayed ranking $\bar{y}_t$, and the clicks $c_t$ received on the ranking:
\begin{equation}
\mathcal{D} = \{(\bar{\pi}_t, q_t, \bar{y}_t, c_t)\}_{t=1}^{T},
\end{equation}
where $c_t(d) \in \{0,1\}$ indicates whether item $d$ was clicked at timestep $t$.
While clicks are indicative of relevance they are also affected by several forms of bias, as discussed in Section~\ref{sec:userinteractions}.

Counterfactual \ac{LTR} methods utilize estimators that correct for bias to unbiasedly estimate the reward of a policy $\pi$.
The prevalent methods introduce a function $\hat{\Delta}$ that transforms a single click signal to correct for bias.
The general estimate of the reward is:
\begin{equation}
\hat{\mathcal{R}}(\pi \mid \mathcal{D}) =  \frac{1}{T} \sum_{t=1}^T \sum_{d \in D_{q_t}}  \lambda(d \mid D_{q_t}, \pi, q) \hat{\Delta}(d \mid \bar{\pi}_t, q_t, \bar{y}_t, c_t).
\label{eq:estimatedreward}
\end{equation}
We note the important distinction between the policy $\pi$ for which we estimate the reward, and the policy $\bar{\pi}_t$ that was used to gather interactions.
During optimization only $\pi$ is changed in order to maximize the estimated reward.

The original \ac{IPS} based estimator introduced by \citet{wang2016learning} and \citet{joachims2017unbiased} weights clicks according to examination probabilities:
\begin{equation}
\hat{\Delta}_{\text{IPS}}(d \mid \bar{y}_t, c_t) =  \frac{c_t(d)}{P(E = 1 \mid \bar{y}_t, d)}.
\end{equation}
This estimator results in unbiased optimization under two requirements.
First, every relevant item must have a non-zero examination probability in all displayed rankings:
\begin{equation}
\forall t,  \forall d \in D_{q_t}\, \left(P(R=1 \mid d, q_t) > 0
\rightarrow P(E = 1 \mid \bar{y}_t, d) > 0\right).
\end{equation}
Second, the click probability conditioned on relevance on examined items should be the same on every rank:
\begin{equation}
\forall k, k'\, \left(P(C \mid k, R, E=1) = P(C \mid k', R, E=1)\right),
\label{eq:notrustreq}
\end{equation}
i.e., no trust bias is present.
These requirements illustrate that this estimator can only correct for position bias, and is biased when item-selection bias or trust bias is present.
For a proof we refer to previous work by \citet{joachims2017unbiased} and \citet{vardasbi2020trust}.

\citet{oosterhuis2020topkrankings} adapt the \ac{IPS} approach to correct for item-selection bias as well.
They weight clicks according to examination probabilities conditioned on the logging policy, instead of the single displayed ranking on which a click took place.
This results in the \emph{policy-aware} estimator:
\begin{equation}
\begin{split}
\hat{\Delta}_{\text{aware}}(d \mid \bar{\pi}_t, q_t, c_t) &= \frac{c_t(d)}{P(E = 1 \mid \bar{\pi}_t, q_t, d)}
\\
&= 
\frac{c_t(d)}{\sum_{y} \bar{\pi}(y \mid q_t)P(E = 1 \mid y, d, q_t)}.
\end{split}
\end{equation}
This estimator can be used for unbiased optimization under two assumptions.
First, every relevant item must have a non-zero examination probability under the logging policy:
\begin{equation}
\mbox{}\hspace*{-2mm}
\forall t, \forall d \in D_{q_t} \left(P(R=1 \mid, d, q_t) > 0 \rightarrow
P(E = 1 \,|\, \bar{\pi}_t, d, q_t) > 0\right).
\hspace*{-2mm}
\end{equation}
Second, no trust bias is present as described in Eq~\ref{eq:notrustreq}.
Importantly, this first requirement can be met under item-selection bias, since a stochastic ranking policy can always provide every item a non-zero probability of appearing in a top-$k$ ranking.
Thus, even when not all items can be displayed at once, a stochastic policy can provide non-zero examination probabilities to all items.
For a proof of this claim we refer to previous work by \citet{oosterhuis2020topkrankings}.

Lastly, \citet{vardasbi2020trust} prove that \ac{IPS} cannot correct for trust bias.
As an alternative, they introduce an estimator based on affine corrections.
This \emph{affine} estimator penalizes an item displayed at rank $k$ by $\beta_k$ while also reweighting inversely w.r.t.\ $\alpha_k$:
\begin{equation}
\hat{\Delta}_{\text{affine}}(d \mid \bar{y}_t, c_t) =  \frac{c_t(d) - \beta_{d,\bar{y}_t}}{\alpha_{d,\bar{y}_t}}.
\end{equation}
The $\beta$ penalties correct for the number of clicks an item is expected to receive due to its displayed rank, instead of its relevance.
The affine estimator is unbiased under a single assumption, namely that the click probability of every item must be correlated with its relevance in every displayed ranking:
\begin{equation}
\forall t, \forall d \in D_{q_t},  \,
\alpha_{d,\bar{y}_t} \not= 0.
\end{equation}
Thus, while this estimator can correct for position bias and trust bias, it cannot correct for item-selection bias.
For a proof of these claims we refer to previous work by \citet{vardasbi2020trust}.

We note that all of these estimators require knowledge of the position bias ($P(E=1 \mid k)$) or trust bias ($\alpha$ and $\beta$).
A lot of existing work has considered how these values can be inferred accurately~\citep{agarwal2019addressing, wang2018position, fang2019intervention}.
The theory in this paper assumes these values are known.

This concludes our description of existing counterfactual estimators on which our method expands.
To summarize, each of these estimators corrects for position bias, one also corrects for item-selection bias, and another also for trust bias. 
Currently, there is no estimator that corrects for all three forms of bias together.

\section{Related Work}

One of the earliest approaches to \ac{LTR} from clicks was introduced by \citet{joachims2002optimizing}.
It infers pairwise preferences between items from click logs and uses pairwise \ac{LTR} to update an SVM ranking model.
While this approach had some success, in later work \citet{joachims2017unbiased} notes that position bias often incorrectly pushes the pairwise loss to flip the ranking displayed during logging.
To avoid this biased behavior, \citet{joachims2017unbiased} proposed the idea of counterfactual \ac{LTR}, in the spirit of earlier work by \citet{wang2016learning}.
This led to estimators that correct for position bias using \ac{IPS} weighting (see Section~\ref{sec:background:counterfactual}).
This work sparked the field of counterfactual \ac{LTR} which has focused on both capturing interaction biases and optimization methods that can correct for them.
Methods for measuring position bias are based on EM optimization~\citep{wang2018position}, a dual learning objective~\citep{ai2018unbiased}, or randomization~\citep{agarwal2019estimating, fang2019intervention};
for trust bias only an EM-based approach is currently known~\citep{agarwal2019addressing}.
\citet{agarwal2019counterfactual} showed how counterfactual \ac{LTR} can optimize neural networks and \ac{DCG}-like methods through upper-bounding.
\citet{oosterhuis2020topkrankings} introduced an \ac{IPS} estimator that can correct for item-selection bias (see Section~\ref{sec:background:counterfactual}), while also showing that the LambdaLoss framework~\citep{wang2018lambdaloss} can be applied to counterfactual \ac{LTR}.
Lastly, \citet{vardasbi2020trust} proved that \ac{IPS} estimators cannot correct for trust bias and introduced an affine estimator that is capable of doing so (see Section~\ref{sec:background:counterfactual}).
There is currently no known estimator that can correct for position bias, item selection bias, and trust bias simultaneously.

The other paradigm for \ac{LTR} from clicks is online \ac{LTR}~\citep{yue2009interactively}.
The earliest method, \ac{DBGD}, samples variations of a ranking model and compares them using online evaluation~\citep{hofmann2011probabilistic}; if an improvement is recognized the model is updated accordingly.
Most online \ac{LTR} methods have increased the data-efficiency of \ac{DBGD}~\citep{schuth2016mgd, wang2019variance, hofmann2013reusing}; later work found that \ac{DBGD} is not effective at optimizing neural models~\citep{oosterhuis2018differentiable} and often fails to find the optimal linear-model even in ideal scenarios~\citep{oosterhuis2019optimizing}.
To these limitations, alternative approaches for online \ac{LTR} have been proposed. \ac{PDGD} takes a pairwise approach but weights pairs to correct for position bias~\citep{oosterhuis2018differentiable}.
While \ac{PDGD} was found to be very effective and robust to noise~\citep{jagerman2019comparison}, it can be proven that its gradient estimation is affected by position bias, thus we do not consider it to be unbiased.
In contrast, \citet{zhuang2020counterfactual} introduced \ac{COLTR}, which takes the \ac{DBGD} approach but uses a form of counterfactual evaluation to compare candidate models.
Despite making use of counterfactual estimation, \citet{zhuang2020counterfactual} propose the method solely for online \ac{LTR}.

Interestingly, with \ac{COLTR} the line between online and counterfactual \ac{LTR} methods starts to blur.
Recent work by \citet{jagerman2019comparison} applied the original counterfactual approach~\citep{joachims2017unbiased} as an online method and found that it lead to improvements.
Furthermore, \citet{ai2020unbiased} noted that with a small adaptation \ac{PDGD} can be applied to historical data.
Although this means that some existing methods can already be applied both online and counterfactually, no method has been found that is the most reliable choice in both scenarios.

\section{An Estimator Oblivious to Online Interventions}

Before we propose our main contribution, the intervention-aware estimator, we will first introduce an estimator that simultaneously corrects for position bias, item-selection bias, and trust bias, without considering the effects of interventions.
Subsequently, the resulting intervention-oblivious estimator will serve as a method to contrast the intervention-aware estimator with.

Section~\ref{sec:background:counterfactual} described how the policy-aware estimator corrects for item-selection bias by taking into account the behavior of the logging policy used to gather clicks~\citep{oosterhuis2020topkrankings}.
Furthermore, Section~\ref{sec:background:counterfactual} also detailed how the affine estimator corrects for trust bias by applying an affine transformation to individual clicks~\citep{vardasbi2020trust}.
We will now show that a single estimator can correct for both item-selection bias and trust bias simultaneously, by combining the approaches of both these existing estimators.

First we note the probability of a click conditioned on a single logging policy $\pi_t$ can be expressed as:
\begin{align}
P(C=1 |\, d, \pi_t, q)
&= \textstyle \sum_{\bar{y}} \pi_t(\bar{y} \mid q) \mleft( \alpha_{d,\bar{y}} P(R=1 \mid d, q) + \beta_{d,\bar{y}} \mright)
\label{eq:clicktimeoblivious}\\
&= \mathbbm{E}_{\bar{y}}[\alpha_{d} \,|\, \pi_t , q] P(R=1 \,|\, d, q) + \mathbbm{E}_{\bar{y}}[\beta_{d} \,|\, \pi_t, q].
\nonumber
\end{align}
where the expected values of $\alpha$ and $\beta$ conditioned on $\pi_t$ are:
\begin{equation}
\begin{split}
\mathbbm{E}_{\bar{y}}[\alpha_{d} \mid \pi_t, q] = \textstyle \sum_{\bar{y}} \pi_t(\bar{y} \mid q) \alpha_{d,\bar{y}},
\\
\mathbbm{E}_{\bar{y}}[\beta_{d} \mid \pi_t, q] = \textstyle \sum_{\bar{y}} \pi_t(\bar{y} \mid q) \beta_{d,\bar{y}}.
\end{split}
\end{equation}
By reversing Eq.~\ref{eq:clicktimeoblivious} the relevance probability can be obtained from the click probability.
We introduce our \emph{intervention-oblivious estimator}, which applies this transformation to correct for bias:
\begin{equation}
\hat{\Delta}_{\text{IO}}(d \mid q_t, c_t) =  \frac{c_t(d) - \mathbbm{E}_{\bar{y}}[\beta_{d} \mid \pi_t, q_t]}{\mathbbm{E}_{\bar{y}}[\alpha_{d} \mid \pi_t, q_t] }.
\label{eq:timeoblivious}
\end{equation}
The intervention-oblivious estimator brings together the policy-aware and affine estimators: on every click it applies an affine transformation based on the logging policy behavior.
Unlike existing estimators, we can prove that the intervention-oblivious estimator is unbiased w.r.t.\ our assumed click model (Section~\ref{sec:userinteractions}).

\begin{theorem}
\label{theorem:oblivious}
The estimated reward $\hat{\mathcal{R}}$ (Eq.~\ref{eq:estimatedreward}) using the intervention-oblivious estimator (Eq.~\ref{eq:timeoblivious}) is unbiased w.r.t.\ the true reward $\mathcal{R}$ (Eq.~\ref{eq:truereward}) under two assumptions: (1)~our click model (Eq.~\ref{eq:trustbiasmodel}), and (2)~the click probability on every item, conditioned on the logging policies per timestep $\pi_t$, is correlated with relevance:
\begin{equation}
\forall t, \forall d \in D_{q_t}\, \mathbbm{E}_{\bar{y}}[\alpha_{d} \mid \pi_t, q_t] \not= 0.
\label{eq:correlationassumption}
\end{equation}
\end{theorem}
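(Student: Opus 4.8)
The plan is to show that $\mathbbm{E}[\hat{\mathcal{R}}(\pi \mid \mathcal{D})] = \mathcal{R}(\pi)$, where the expectation is taken over the generation of the data $\mathcal{D}$: the sampled queries $q_t \sim P(q)$, the displayed rankings $\bar{y}_t \sim \pi_t(\cdot \mid q_t)$, and the clicks $c_t$. By linearity of expectation applied to Eq.~\ref{eq:estimatedreward}, this reduces to proving the per-term identity that, for every timestep $t$ and every $d \in D_{q_t}$,
\[
\mathbbm{E}\mleft[\hat{\Delta}_{\text{IO}}(d \mid q_t, c_t) \mid \pi_t, q_t\mright] = P(R = 1 \mid d, q_t),
\]
since the weights $\lambda(d \mid D_{q_t}, \pi, q_t)$ depend only on $q_t$ and the fixed target policy $\pi$ and are therefore independent of the clicks given $q_t$.

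The core of the argument is the observation that $\mathbbm{E}_{\bar{y}}[\alpha_{d} \mid \pi_t, q_t]$ and $\mathbbm{E}_{\bar{y}}[\beta_{d} \mid \pi_t, q_t]$ are fully determined by $\pi_t$ and $q_t$ and do not depend on the realized ranking $\bar{y}_t$ or on $c_t$. Hence, conditioned on $\pi_t$ and $q_t$, the denominator in Eq.~\ref{eq:timeoblivious} is a constant, which the second assumption (Eq.~\ref{eq:correlationassumption}) guarantees is nonzero so that $\hat{\Delta}_{\text{IO}}$ is well defined; the only remaining random quantity in the conditional expectation is $c_t(d)$ in the numerator. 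I would then compute $\mathbbm{E}[c_t(d) \mid \pi_t, q_t]$ by taking the expectation first over $\bar{y}_t \sim \pi_t(\cdot \mid q_t)$ and then over the click realized on that ranking; using the assumed click model (Eq.~\ref{eq:trustbiasmodel}) this is exactly Eq.~\ref{eq:clicktimeoblivious}, namely $\mathbbm{E}_{\bar{y}}[\alpha_{d} \mid \pi_t, q_t] P(R = 1 \mid d, q_t) + \mathbbm{E}_{\bar{y}}[\beta_{d} \mid \pi_t, q_t]$. Substituting this numerator into Eq.~\ref{eq:timeoblivious}, the $\mathbbm{E}_{\bar{y}}[\beta_{d} \mid \pi_t, q_t]$ terms cancel and the factor $\mathbbm{E}_{\bar{y}}[\alpha_{d} \mid \pi_t, q_t]$ cancels against the denominator, leaving $P(R = 1 \mid d, q_t)$ as required.

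To finish, I would substitute the per-term identity back into Eq.~\ref{eq:estimatedreward}: each of the $T$ summands then equals $\mathbbm{E}_{q_t}\mleft[\sum_{d \in D_{q_t}} \lambda(d \mid D_{q_t}, \pi, q_t) P(R = 1 \mid d, q_t)\mright]$, and expanding this expectation over $q_t \sim P(q)$ reproduces Eq.~\ref{eq:truereward}; averaging the $T$ identical terms gives $\mathbbm{E}[\hat{\mathcal{R}}(\pi \mid \mathcal{D})] = \mathcal{R}(\pi)$.

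The step I expect to be the main obstacle --- or at least the one requiring the most care --- is making the conditioning rigorous when the logging policies are adaptive, i.e.\ when $\pi_t$ is itself chosen based on the data collected before timestep $t$ (the online-intervention setting that is central to this paper). There I would condition on the entire history available at the moment $\bar{y}_t$ is drawn and invoke the tower rule: the per-term identity still holds conditionally on that history, because once $\pi_t$ and $q_t$ are fixed the ranking and the click at timestep $t$ are generated independently of the past. Everything else is routine linearity-of-expectation bookkeeping.
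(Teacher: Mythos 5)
Your proposal is correct and follows essentially the same route as the paper's own proof: establish the per-term identity $\mathbbm{E}[\hat{\Delta}_{\text{IO}}(d \mid q_t, c_t) \mid \pi_t, q_t] = P(R=1 \mid d, q_t)$ by noting that the affine correction terms are constants given $\pi_t$ and $q_t$ and that $\mathbbm{E}[c_t(d) \mid \pi_t, q_t]$ is given by Eq.~\ref{eq:clicktimeoblivious}, then conclude by linearity of expectation over documents, timesteps, and queries. Your closing remark about conditioning on the history when $\pi_t$ is adaptively chosen is a point of rigor the paper's proof passes over silently, but it does not change the argument's structure.
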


\begin{proof}
Using Eq.~\ref{eq:clicktimeoblivious} and Eq.~\ref{eq:correlationassumption} the relevance probability can be derived from the click probability by:
\begin{equation}
P(R=1 \mid d, q) = \frac{P(C=1 \mid d, \pi_t, q) - \mathbbm{E}_{\bar{y}}[\beta_{d} \mid \pi_t, q]}{\mathbbm{E}_{\bar{y}}[\alpha_{d} \mid \pi_t, q] }.
\label{eq:relderivoblivious}
\end{equation}
Eq.~\ref{eq:relderivoblivious} can be used to show that $\hat{\Delta}_{\text{IO}}$ is an unbiased indicator of relevance:
\begin{align}
\mathbbm{E}_{\bar{y},c}\big[\hat{\Delta}_{\text{IO}}(d \mid q_t, c_t) \mid \pi_t \big] \hspace{-2cm}&
\nonumber \\
&=
\mathbbm{E}_{\bar{y},c}\mleft[\frac{c_t(d) - \mathbbm{E}_{t,\bar{y}}[\beta_{d} \mid \pi_t, q_t]}{\mathbbm{E}_{\bar{y}}[\alpha_{d} \mid \pi_t, q_t] }\mid \pi_t, q_t \mright]
\\
&=  \frac{ \mathbbm{E}_{\bar{y},c}\mleft[ c_t(d)  \mid \pi_t, q_t\mright] - \mathbbm{E}_{\bar{y}}[\beta_{d} \mid \pi_t, q_t]}{\mathbbm{E}_{\bar{y}}[\alpha_{d} \mid \pi_t, q_t] }
\nonumber 
\\
&=  \frac{P(C=1 \mid d, \pi_t, q_t) - \mathbbm{E}_{\bar{y}}[\beta_{d} \mid \pi_t, q_t]}{\mathbbm{E}_{\bar{y}}[\alpha_{d} \mid \pi_t, q_t] } 
\nonumber \\
& = P(R=1 \mid d, q_t).
\nonumber
\label{eq:singleproofoblivious}
\end{align}
Finally, combining Eq.~\ref{eq:truereward} with Eq.~\ref{eq:estimatedreward} and Eq.~\ref{eq:singleproofoblivious} reveals that $\hat{\mathcal{R}}$ based on the intervention-oblivious estimator $\hat{\Delta}_{\text{IO}}$ is unbiased w.r.t.\  $\mathcal{R}$:
\begin{align}
&\mathbbm{E}_{t,q,\bar{y},c}
\mleft[ \hat{\mathcal{R}}(\pi \mid \mathcal{D}) \mright]
\\
& = 
\sum_{q} P(q) \sum_{d \in D_q}  \lambda(d \mid D_{q}, \pi, q) \frac{1}{T} \sum_{t=1}^T \mathbbm{E}_{\bar{y},c}\mleft[ \hat{\Delta}_{\text{IO}}(d \mid c, q) \,|\, \pi_t, q\mright]
\mbox{}\hspace*{-4mm}
\nonumber \\
& = 
\sum_{q} P(q) \sum_{d \in D_q} \lambda(d \mid D_q, \pi, q) P(R = 1 \mid d, q)
= \mathcal{R}(\pi).
\qedhere
\end{align}
\end{proof}

\subsection{Example with an Online Intervention}

\begin{figure}[t]
\centering
{\renewcommand{\arraystretch}{0.2}
\begin{tabular}{c c}
\includegraphics[scale=0.53]{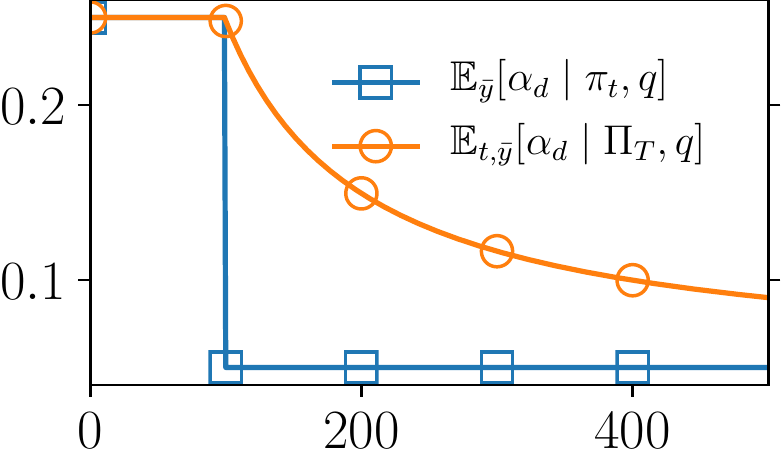} 
&
\includegraphics[scale=0.53]{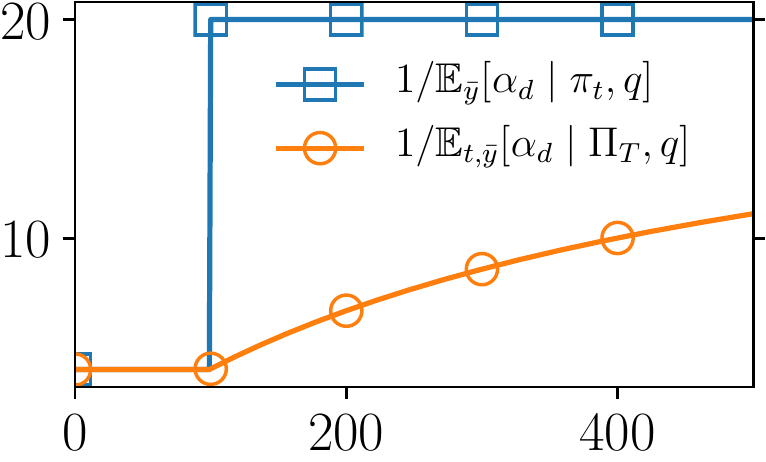}
\\
\small Timestep $T$
&
\small Timestep $T$
\end{tabular}
}
\caption{
Example of an online intervention and the weights used by the intervention-oblivious and intervention-aware estimators for a single item as more data is gathered.
}
\label{fig:intuition}
\end{figure}

Existing estimators for counterfactual \ac{LTR} are designed for a scenario where the logging policy is static: $\forall (\pi_t,\pi_{t'})\in \mathcal{D}, \, \pi_t = \pi_{t'}$.
However, we note that if an online intervention takes place~\citep{jagerman2019comparison}, meaning that the logging policy was updated during the gathering of data: $\exists (\pi_t,\pi_{t'})\in \mathcal{D}, \, \pi_t \not = \pi_{t'}$, the intervention-oblivious estimator is still unbiased.
This was already proven in Theorem~\ref{theorem:oblivious} because its assumptions cover both scenarios where online interventions do and do not take place.

However, the individual corrections of the intervention-oblivious estimator are only based on the single logging policy that was deployed at the timestep of each specific click.
It is completely oblivious to the logging policies applied at different timesteps.
Although this does not lead to bias in its estimation, it does result in unintuitive behavior.
We illustrate this behavior in Figure~\ref{fig:intuition}, here a logging policy that results in $\mathbbm{E}[\alpha_d \mid \pi_t, q] = 0.25$ for an item $d$ is deployed during the first $t \leq 100$ timesteps.
Then an online intervention takes place and the logging policy is updated so that for $t > 100$,  $ \mathbbm{E}[\alpha_d \mid \pi_t, q] = 0.05$.
The intervention-oblivous estimator weights clicks inversely to $\mathbbm{E}[\alpha_d \mid \pi_t]$; so clicks for $t \leq 100$ will be weighted by $1/0.25=4$ and for $t > 100$ by $1/0.05=20$.
Thus, there is a sharp and sudden difference in how clicks are treated before and after $t=100$.
What is unintuitive about this example is that the way clicks are treated after $t=100$ is completely independent of what the situation was before $t=100$.
For instance, consider another item $d'$ where $\forall t, \mathbbm{E}[\alpha_{d'} \mid \pi_t, q] = 0.05$.
If both $d$ and $d'$ are clicked on timestep $t = 101$, these clicks would both be weighted by $20$, despite the fact that $d$ has so far been treated completely different than $d'$.
One would expect that in such a case the click on $d$ should be weighted less, to compensate for the high $\mathbbm{E}[\alpha_d \mid \pi_t, q]$ it had in the first 100 timesteps.
The question is whether such behavior can be incorporated in an estimator without introducing bias.

\section{The Intervention-Aware Estimator}

Our goal for the intervention-aware estimator is to find an estimator whose individual corrections are not only based on single logging policies, but instead consider the entire collection of logging policies used to gather the data $\mathcal{D}$.
Importantly, this estimator should also be unbiased w.r.t.\ position bias, item-selection bias and trust bias.

For ease of notation, we use $\Pi_T$ for the set of policies that gathered the data in $\mathcal{D}$:
$\Pi_T = \{ \pi_1, \pi_2, \ldots, \pi_T\}$.
The probability of a click can be conditioned on this set:
\begin{equation}
\begin{split}
P(C=1 \mid d, \Pi_T, q)  \hspace{-1.8cm} & \\
& = \frac{1}{T}\sum_{t=1}^T \sum_{\bar{y}} \pi_t(\bar{y} \mid q) \mleft( \alpha_{d,\bar{y}} P(R=1 \mid d, q) + \beta_{d,\bar{y}} \mright)
\\
&= \mathbbm{E}_{t,\bar{y}}[\alpha_{d} \mid \Pi_T , q] P(R=1 \mid d, q) + \mathbbm{E}_{t,\bar{y}}[\beta_{d} \mid \Pi_T, q],
\end{split}
\label{eq:clicktimeaware}
\end{equation}
where the expected values of $\alpha$ and $\beta$ conditioned on $\Pi_T$ are:
\begin{equation}
\begin{split}
\mathbbm{E}_{t,\bar{y}}[\alpha_{d} \mid \Pi_T, q] =\frac{1}{T} \textstyle \sum_{t=1}^T \sum_{\bar{y}} \pi_t(\bar{y} \mid q) \alpha_{d,\bar{y}},
\\
\mathbbm{E}_{t,\bar{y}}[\beta_{d} \mid \Pi_T, q] =\frac{1}{T} \textstyle \sum_{t=1}^T \sum_{\bar{y}} \pi_t(\bar{y} \mid q) \beta_{d,\bar{y}}.
\end{split}
\end{equation}
Thus $P(C=1 \mid d, \Pi_T, q)$ gives us the probability of a click given that any policy from $\Pi_T$ could be deployed.
We propose our \emph{intervention-aware estimator} that corrects for bias using the expectations conditioned on $\Pi_T$:
\begin{equation}
\hat{\Delta}_{\text{IA}}(d \mid q_t, c_t) =  \frac{c_t(d) - \mathbbm{E}_{t,\bar{y}}[\beta_{d} \mid \Pi_T, q_t]}{\mathbbm{E}_{t,\bar{y}}[\alpha_{d} \mid \Pi_T, q_t] }.
\label{eq:timeaware}
\end{equation}
The salient difference with the intervention-oblivious estimator is that the expectations are conditioned on $\Pi_T$, all logging policies in $\mathcal{D}$, instead of an individual logging policy $\pi_t$.
While the difference with the intervention-oblivious estimator seems small, our experimental results show that the differences in performance are actually quite sizeable.
Lastly, we note that when no interventions take place the intervention-oblivious estimator and intervention-aware estimators are equivalent.
Because the intervention-aware estimator is the only existing counterfactual \ac{LTR} estimator whose corrections are influenced by online interventions, we consider it to be a step that helps to bridge the gap between counterfactual and online \ac{LTR}.

Before we revisit our online intervention example with our novel intervention-aware estimator, we prove that it is unbiased w.r.t.\ our assumed click model (Section~\ref{sec:userinteractions}).

\begin{theorem}
The estimated reward $\hat{\mathcal{R}}$ (Eq.~\ref{eq:estimatedreward}) using the intervention-aware estimator (Eq.~\ref{eq:timeaware}) is unbiased w.r.t.\ the true reward $\mathcal{R}$ (Eq.~\ref{eq:truereward}) under two assumptions: (1)~our click model (Eq.~\ref{eq:trustbiasmodel}), and (2)~the click probability on every item, conditioned on the set of logging policies $\Pi_T$, is correlated with relevance:
\begin{equation}
\forall q, \forall d \in D_q, \quad  \mathbbm{E}_{t,\bar{y}}[\alpha_{d} \mid \Pi_T, q] \not= 0.
\label{eq:correlationassumption}
\end{equation}
\end{theorem}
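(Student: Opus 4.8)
The plan is to follow the proof of Theorem~\ref{theorem:oblivious}, changing only \emph{where} and \emph{how} unbiasedness is established. First I would rearrange Eq.~\ref{eq:clicktimeaware}, using the correlation assumption (Eq.~\ref{eq:correlationassumption}) to guarantee the denominator is nonzero, to obtain the $\Pi_T$-conditioned analogue of Eq.~\ref{eq:relderivoblivious}:
\begin{equation*}
P(R=1 \mid d, q) = \frac{P(C=1 \mid d, \Pi_T, q) - \mathbbm{E}_{t,\bar{y}}[\beta_{d} \mid \Pi_T, q]}{\mathbbm{E}_{t,\bar{y}}[\alpha_{d} \mid \Pi_T, q]}.
\end{equation*}
The subtlety is that, unlike $\hat{\Delta}_{\text{IO}}$, the estimator $\hat{\Delta}_{\text{IA}}$ is \emph{not} an unbiased indicator of relevance at an individual timestep: conditioning on the deployed policy $\pi_t$ and query $q_t$, its expectation is $\big(P(C=1 \mid d, \pi_t, q_t) - \mathbbm{E}_{t,\bar{y}}[\beta_{d} \mid \Pi_T, q_t]\big)\big/\mathbbm{E}_{t,\bar{y}}[\alpha_{d} \mid \Pi_T, q_t]$, whose numerator carries the \emph{single-policy} click probability $P(C=1 \mid d, \pi_t, q_t)$ rather than the $\Pi_T$-averaged one. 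So per-timestep unbiasedness cannot be the route, and this is the key departure from Theorem~\ref{theorem:oblivious}.

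Instead I would carry these per-timestep conditional expectations up to the level of the reward estimate. Conditioning on the realized collection $\Pi_T$ (exactly as Theorem~\ref{theorem:oblivious} conditions on $\pi_t$) and taking the expectation over the i.i.d.\ query draws $q_t \sim P(q)$ and the clicks, $\hat{\mathcal{R}}$ (Eq.~\ref{eq:estimatedreward}) has expectation
\begin{equation*}
\mathbbm{E}\mleft[\hat{\mathcal{R}}(\pi \mid \mathcal{D})\mright] = \frac{1}{T}\sum_{t=1}^T \sum_{q} P(q) \sum_{d \in D_q} \lambda(d \mid D_q, \pi, q)\, \mathbbm{E}\mleft[\hat{\Delta}_{\text{IA}}(d \mid q, c_t) \mid \pi_t, q\mright].
\end{equation*}
The crucial structural observation is that $\lambda(d \mid D_q, \pi, q)$ and the $\Pi_T$-conditioned corrections $\mathbbm{E}_{t,\bar{y}}[\alpha_{d} \mid \Pi_T, q]$ and $\mathbbm{E}_{t,\bar{y}}[\beta_{d} \mid \Pi_T, q]$ do not depend on $t$, so I can interchange the sums over $t$ and $q$ and isolate the only $t$-dependent term, namely $\frac{1}{T}\sum_{t=1}^T P(C=1 \mid d, \pi_t, q)$. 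By the very definition of the $\Pi_T$-conditioned click probability in Eq.~\ref{eq:clicktimeaware} this average equals $P(C=1 \mid d, \Pi_T, q)$; substituting it, and then the inverted identity from the first step, collapses the inner expression to $P(R=1 \mid d, q)$, and re-grouping $\sum_d \sum_{q : d \in D_q} = \sum_q \sum_{d \in D_q}$ yields exactly $\mathcal{R}(\pi)$ (Eq.~\ref{eq:truereward}).

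The main obstacle is conceptual rather than computational: recognizing that unbiasedness here is a property of the estimator \emph{averaged over all logged timesteps} and not of any single correction, and that this averaging is precisely what turns the mixture of single-policy click probabilities into the one $\Pi_T$-conditioned click probability that $\hat{\Delta}_{\text{IA}}$ is built to invert. A secondary care point is the order of the expectations: the query randomness must be integrated before averaging over $t$ so that $\frac{1}{T}\sum_{t} P(C=1 \mid d, \pi_t, q)$ emerges cleanly for each query value $q$, and---as in Theorem~\ref{theorem:oblivious}---conditioning on $\Pi_T$ neutralizes any dependence of the logging policies on earlier interactions. I would also note in passing that assumption~(2) is strictly weaker than the per-timestep assumption of Theorem~\ref{theorem:oblivious}, since the average $\mathbbm{E}_{t,\bar{y}}[\alpha_{d} \mid \Pi_T, q]$ can be nonzero even when some individual $\mathbbm{E}_{\bar{y}}[\alpha_{d} \mid \pi_t, q]$ vanish.
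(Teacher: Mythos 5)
Your proposal is correct and follows essentially the same route as the paper: invert Eq.~\ref{eq:clicktimeaware} under assumption~(2), observe that $\frac{1}{T}\sum_{t=1}^{T} P(C=1 \mid d, \pi_t, q) = P(C=1 \mid d, \Pi_T, q)$, and collapse the reward estimate to $\mathcal{R}(\pi)$. The only difference is presentational: you make explicit the averaging over timesteps that the paper folds into its $\mathbbm{E}_{t,\bar{y},c}[\,\cdot \mid \Pi_T]$ notation (and your observation that $\hat{\Delta}_{\text{IA}}$ is not unbiased per individual timestep is a correct and useful clarification of that notation, not a departure from the argument).
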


\begin{proof}
Using Eq.~\ref{eq:clicktimeaware} and Eq.~\ref{eq:correlationassumption} the relevance probability can be derived from the click probability by:
\begin{equation}
P(R=1 \mid d, q) = \frac{P(C=1 \mid d, \Pi_T, q) - \mathbbm{E}_{t,\bar{y}}[\beta_{d} \mid \Pi_T, q]}{\mathbbm{E}_{t,\bar{y}}[\alpha_{d} \mid \Pi_T, q] }.
\label{eq:relderiv}
\end{equation}
Eq.~\ref{eq:relderiv} can be used to show that $\hat{\Delta}_{\text{IA}}$ is an unbiased indicator of relevance:
\begin{align}
\mathbbm{E}_{t,\bar{y},c}\big[\hat{\Delta}_{\text{IA}}(d \mid q_t, c_t) \mid \Pi_T \big]
\hspace{-2cm}&
\nonumber \\
&=
\mathbbm{E}_{t,\bar{y},c}\mleft[\frac{c_t(d) - \mathbbm{E}_{t,\bar{y}}[\beta_{d} \mid \Pi_T, q_t]}{\mathbbm{E}_{t,\bar{y}}[\alpha_{d} \mid \Pi_T, q_t] }\mid \Pi_T, q_t \mright]
\\
&=  \frac{ \mathbbm{E}_{t,\bar{y},c}\mleft[ c_t(d)  \mid \Pi_T, q_t\mright] - \mathbbm{E}_{t,\bar{y}}[\beta_{d} \mid \Pi_T, q_t]}{\mathbbm{E}_{t,\bar{y}}[\alpha_{d} \mid \Pi_T, q_t] }
\nonumber \\
&=  \frac{P(C=1 \mid d, \Pi_T, q_t) - \mathbbm{E}_{t,\bar{y}}[\beta_{d} \mid \Pi_T, q_t]}{\mathbbm{E}_{t,\bar{y}}[\alpha_{d} \mid \Pi_T, q_t] } 
\nonumber \\
&= P(R=1 \mid d, q_t).
\nonumber
\label{eq:singleproof}
\end{align}
Finally, combining Eq.~\ref{eq:singleproof} with Eq.~\ref{eq:estimatedreward} and Eq.~\ref{eq:truereward} reveals that $\hat{\mathcal{R}}$ based on the intervention-aware estimator $\hat{\Delta}_{\text{IA}}$ is unbiased w.r.t.\  $\mathcal{R}$:
\begin{align}
\mathbbm{E}_{t,q,\bar{y},c}
\mleft[ \hat{\mathcal{R}}(\pi \mid \mathcal{D}) \mright] \hspace{-2.3cm} &
\nonumber \\
& = 
\sum_{q} P(q) \sum_{d \in D_q}  \lambda(d \mid D_{q}, \pi, q) \mathbbm{E}_{t,\bar{y},c}\mleft[ \hat{\Delta}_{\text{IA}}(d \mid c, q) \mid \Pi_T, q\mright]
\nonumber \\
& = 
\sum_{q} P(q) \sum_{d \in D_q} \lambda(d \mid D_q, \pi, q) P(R = 1 \mid d, q) \\
& = \mathcal{R}(\pi).
\qedhere
\end{align}
\end{proof}

\subsection{Online Intervention Example Revisited}

We will now revisit the example in Figure~\ref{fig:intuition}, but this time consider how the intervention-aware estimator treats item $d$.
Unlike the intervention-oblivious estimator, clicks are weighted by $\mathbbm{E}[\alpha_d \mid \Pi_T]$ which means that the exact timestep $t$ of a click does not matter, as long as $t < T$.
Furthermore, the weight of a click can change as the total number of timesteps $T$ increases.
In other words, as more data is gathered, the intervention-aware estimator retroactively updates the weights of all clicks previously gathered.

We see that this behavior avoids the sharp difference in weights of clicks occurring before the intervention $t \leq 100$ and after $t > 100$.
For instance, for a click on $d$ occuring at $t=101$ while $T=400$, results in $\mathbbm{E}[\alpha_d \mid \Pi_T] = 0.1$ and thus a weight of $1/0.1 = 10$.
This is much lower than the intervention-oblivious weight of $1/0.05=20$, because the intervention-aware estimator is also considering the initial period where $\mathbbm{E}[\alpha_{d} \mid \pi_t, q]$ was high.
Thus we see that the intervention-aware estimator has the behavior we intuitively expected: it weights clicks based on how the item was treated throughout all timesteps.
In this example, it leads weights considerably smaller than those used by the intervention-oblivious estimator.
In \ac{IPS} estimators, low propensity weights are known to lead to high variance~\citep{joachims2017unbiased}, thus we may expect that the intervention-aware estimator reduces variance in this example.

\subsection{An Online and Counterfactual Approach}
\label{sec:onlinecounterapproach}

While the intervention-aware estimator takes into account the effect of interventions, it does not prescribe what interventions should take place.
In fact, it will work with any interventions that result in Eq.~\ref{eq:correlationassumption} being true, including the situation where no intervention takes place at all.
For clarity, we will describe the intervention approach we applied during our experiments here.
Algorithm~\ref{alg:online} displays our online/counterfactual approach.
As input it requires a starting policy ($\pi_0$), a choice for $\lambda$, the $\alpha$ and $\beta$ parameters, a set of intervention timesteps ($\Phi$), and the final timestep $T$.

The algorithm starts by initializing an empty set to store the gathered interaction data (Line~\ref{line:initdata}) and initializes the logging policy with the provided starting policy $\pi_0$.
Then for each timestep $i$ in $\Phi$ the dataset is expanded using the current logging policy so that $|\mathcal{D}| = i$ (Line~\ref{line:interventiongather}).
In other words, for $i - |\mathcal{D}|$ timesteps $\pi$ is used to display rankings to user-issued queries, and the resulting interactions are added to $\mathcal{D}$.
Then a policy is optimized using the available data in $\mathcal{D}$ which becomes the new logging policy.
For this optimization, we split the available data in training and validation partitions in order to do early stopping to prevent overfitting.
We use stochastic gradient descent where we use $\pi_0$ as the initial model; this practice is based on the assumption that $\pi_0$ has a better performance than a randomly initialized model.
Thus, during optimization, gradient calculation uses the intervention-aware estimator on the training partition of $\mathcal{D}$, and after each epoch, optimization is stopped if the intervention-aware estimator using the validation partition of $\mathcal{D}$ suspects overfitting.
Each iteration results in an intervention as the resulting policy replaces the logging policy, and thus changes the way future data is logged.
After iterating over $\Phi$ is completed, more data is gathered so that $|\mathcal{D}| = T$ and optimization is performed once more.
The final policy is the end result of the procedure.

We note that, depending on $\Phi$, our approach can be either online, counterfactual, or somewhere in between.
If $\Phi = \emptyset$ the approach is fully counterfactual since all data is gathered using the static $\pi_0$.
Conversely, if $\Phi = \{1,2,3,\ldots,T\}$ it is fully online since at every timestep the logging policy is updated.
In practice, we expect a fully online procedure to be infeasible as it is computationally expensive and user queries may be issued faster than optimization can be performed.
In our experiments we will investigate the effect of the number of interventions on the approach's performance.

\section{Experimental Setup}

Our experiments aim to answer the following research questions:
\begin{enumerate}[align=left, label={\bf RQ\arabic*},leftmargin=*]
\item Does the intervention-aware estimator lead to higher performance than existing counterfactual \ac{LTR} estimators when online interventions take place?
\item Does the intervention-aware estimator lead to performance comparable with existing online \ac{LTR} methods?
\end{enumerate}
We use the semi-synthetic experimental setup that is common in existing work on both online \ac{LTR}~\citep{oosterhuis2018differentiable, oosterhuis2019optimizing, hofmann2013reusing, zhuang2020counterfactual} and counterfactual \ac{LTR}~\citep{vardasbi2020trust, ovaisi2020correcting, joachims2017unbiased}.
In this setup, queries and documents are sampled from a dataset based on commercial search logs, while user interactions and rankings are simulated using probabilistic click models.
The advantage of this setup is that it allows us to investigate the effects of online interventions on a large scale while also being easy to reproduce by researchers without access to live ranking systems.

\begin{algorithm}[t]
\caption{Our Online/Counterfactual \ac{LTR} Approach} 
\label{alg:online}
\begin{algorithmic}[1]
\STATE \textbf{Input}: Starting policy: $\pi_0$; Metric weight function: $\lambda$;\\
\qquad\quad
Inferred bias parameters: $\alpha$ and $\beta$;\\
\qquad\quad
Interventions steps: $\Phi$; End-time: $T$.
\STATE $\mathcal{D} \leftarrow \{\}$ \hfill \textit{\small // initialize data container} \label{line:initdata}
\STATE $\pi \gets \pi_0$ \hfill \textit{\small // initialize logging policy}  \label{line:initpolicy}
\FOR{$i \in \Phi$}
\STATE $\mathcal{D} \leftarrow \mathcal{D} \cup \text{gather}(\pi, i-|\mathcal{D}|)$  \hfill \textit{\small // observe $i-|\mathcal{D}|$ timesteps}  \label{line:interventiongather}
\STATE $\pi \leftarrow \text{optimize}(\mathcal{D}, \alpha, \beta, \pi_0)$ \hfill \textit{\small // optimize based on available data} \label{line:interventionoptimize}
\ENDFOR
\STATE $\mathcal{D} \leftarrow \mathcal{D} \cup \text{gather}(\pi, T-|\mathcal{D}|)$  \hfill \textit{\small // expand data to $T$} \label{line:finalgather}
\STATE $\pi \leftarrow \text{optimize}(\mathcal{D}, \alpha, \beta, \pi_0)$ \hfill \textit{\small // optimize based on final data} \label{line:finaloptimize}
\RETURN $\pi$ 
\end{algorithmic}
\end{algorithm}

We use the publicly-available Yahoo Webscope dataset~\citep{Chapelle2011}, which consists of \numprint{29921} queries with, on average, 24 documents preselected per query.
Query-document pairs are represented by 700 features and five-grade relevance annotations ranging from not relevant (0) to perfectly relevant (4). The queries are divided into training, validation and test partitions.

At each timestep, we simulate a user-issued query by uniformly sampling from the training and validation partitions.
Subsequently, the preselected documents are ranked according to the logging policy, and user interactions are simulated on the top-5 of the ranking using a probabilistic click model.
We apply Eq.~\ref{eq:clickmodelused} with $\alpha = [0.35, 0.53, 0.55, 0.54, 0.52]$ and $\beta = [0.65, 0.26, 0.15, 0.11, 0.08]$; the relevance probabilities are based on the annotations from the dataset: $P(R = 1 \mid d,q) = 0.25 \cdot \text{relevance\_label}(d,q)$.
The values of $\alpha$ and $\beta$ were chosen based on those reported by \citet{agarwal2019addressing} who inferred them from real-world user behavior.
In doing so, we aim to emulate a setting where realistic levels of position bias, item-selection bias, and trust bias are present.

All counterfactual methods use the approach described in Section~\ref{sec:onlinecounterapproach}.
To simulate a production ranker policy, we use supervised \ac{LTR} to train a ranking model on 1\% of the training partition~\citep{joachims2017unbiased}.
The resulting production ranker has much better performance than a randomly initialized model, yet still leaves room for improvement.
We use the production ranker as the initial logging policy.
The size of $\Phi$ (the intervention timesteps) varies per run, and the timesteps in $\Phi$ are evenly spread on an exponential scale.
All ranking models are neural networks with two hidden layers, each containing 32 hidden units with sigmoid activations.
Gradients are calculated using a Monte-Carlo method following \citet{oosterhuis2020taking}.
All policies apply a softmax to the document scores produced by the ranking models to obtain a probability distribution over documents.
Clipping is only applied on the training clicks, denominators of any estimator are clipped by $10/\sqrt{|\mathcal{D}|}$ to reduce variance.
Early stopping is applied based on counterfactual estimates of the loss using (unclipped) validation clicks.

The following methods are compared:
\begin{enumerate*}[label=(\roman*)]
\item The intervention-aware estimator.
\item The intervention-oblivious estimator.
\item The policy-aware estimator~\citep{oosterhuis2020topkrankings}.
\item The affine estimator~\citep{vardasbi2020trust}.
\item \ac{PDGD}~\citep{oosterhuis2018differentiable}, we apply \ac{PDGD} both online and as a counterfactual method.
As noted by \citet{ai2020unbiased}, this can be done by separating the logging models from the learned model and, basing the debiasing weights on the logging function.
\item Biased \ac{PDGD}, identical to \ac{PDGD} except that we do not apply the debiasing weights.
\item \ac{COLTR}~\citep{zhuang2020counterfactual}.
\end{enumerate*}
We compute the \ac{NDCG} of both the logging policy and of a policy trained on all available data.
Every reported result is the average of 20 independent runs, figures plot the mean, shaded areas indicate 90\% confidence bounds.
To facilitate reproducibility, our implementation will be made publicly available.

\begin{figure}[t]
\centering
{\renewcommand{\arraystretch}{0.2}
\begin{tabular}{r c}
\rotatebox[origin=lt]{90}{\footnotesize \hspace{4.3em} NDCG} &
\includegraphics[scale=0.44]{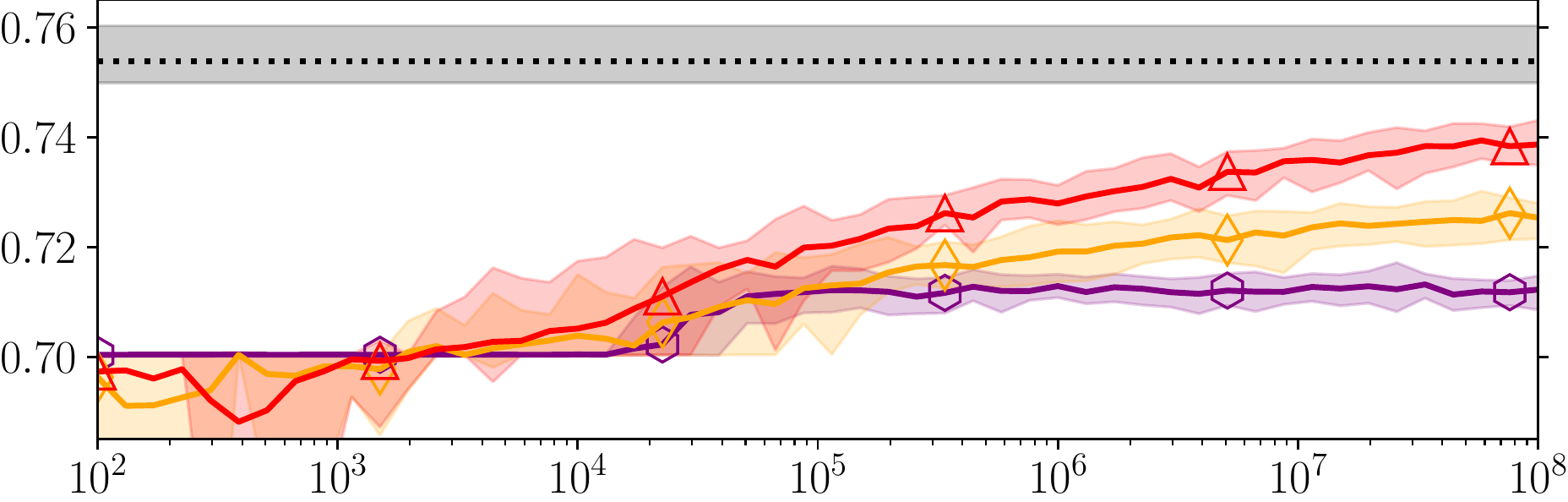}
\\
\rotatebox[origin=lt]{90}{\footnotesize \hspace{4.3em} NDCG} &
\includegraphics[scale=0.44]{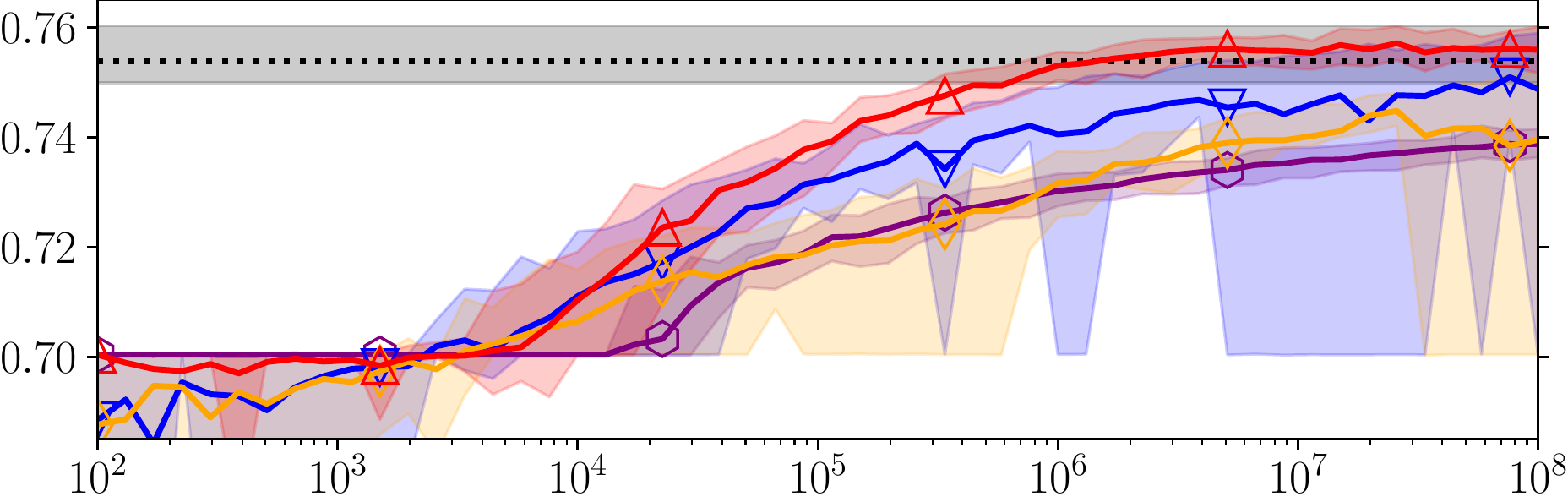}
\\
& \footnotesize \hspace{1em} Number of Logged Queries
\end{tabular}
}
\includegraphics[scale=0.35]{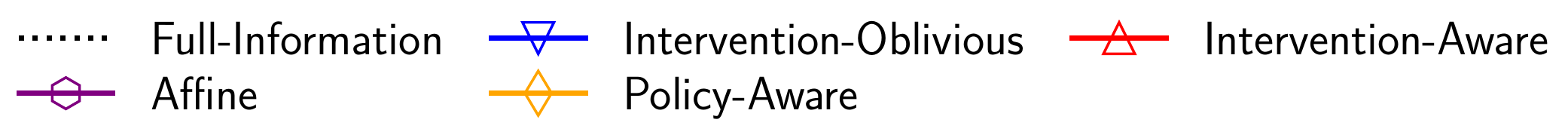}
\caption{
Comparison of counterfactual \ac{LTR} estimators. Top: Counterfactual runs (no interventions); Bottom: Online runs (50 interventions).
Results based on an average of 20 runs, shaded area indicates the 90\% confidence bounds.
}
\label{fig:estimators}
\end{figure}

\section{Results and Discussion}

\subsection{Comparison with Counterfactual \ac{LTR}}

To answer the first research question: \emph{whether the intervention-aware estimator leads to higher performance than existing counterfactual LTR estimators when online interventions take place},
we consider Figure~\ref{fig:estimators} which displays the performance of \ac{LTR} using different counterfactual estimators.

First we consider the top of Figure~\ref{fig:estimators} which displays performance in the counterfactual setting where the logging policy is static.\footnote{Since under a static logging policy the intervention-aware and the intervention-oblivious estimators are equivalent, our conclusions apply to both in this setting.}
We clearly see that the affine estimator converges at a suboptimal point of convergence, a strong indication of bias.
The most probable cause is that the affine estimator is heavily affected by the presence of item-selection bias.
In contrast, neither the policy-aware estimator nor the intervention-aware estimator have converged after $10^8$ queries.
However, very clearly the intervention-aware estimator quickly reaches a higher performance.
While the theory guarantees that it will converge at the optimal performance, we were unable to observe the number of queries it requires to do so.
From the result in the counterfactual setting, we conclude that by correcting for position-bias, trust-bias, and item-selection bias the intervention-aware estimator already performs better without online interventions.

Second, we turn to the bottom of Figure~\ref{fig:estimators} which considers the online setting where the estimators perform 50 online interventions during logging.
We see that online interventions have a positive effect on all estimators; leading to a higher performance for the affine and policy-aware estimators as well.
However, interventions also introduce an enormous amount of variance for the policy-aware and intervention-oblivious estimators.
In stark contrast, the amount of variance of the intervention-aware estimator hardly increases while it learns much faster than the other estimators.

Thus we answer the first research question positively:
the inter\-vention-aware estimator leads to higher performance than existing estimators, moreover, its data-efficiency becomes even greater when online interventions take place.

\begin{figure}[t]
\centering
{\renewcommand{\arraystretch}{0.2}
\begin{tabular}{r c}
\rotatebox[origin=lt]{90}{\footnotesize \hspace{4.3em} NDCG} &
\includegraphics[scale=0.44]{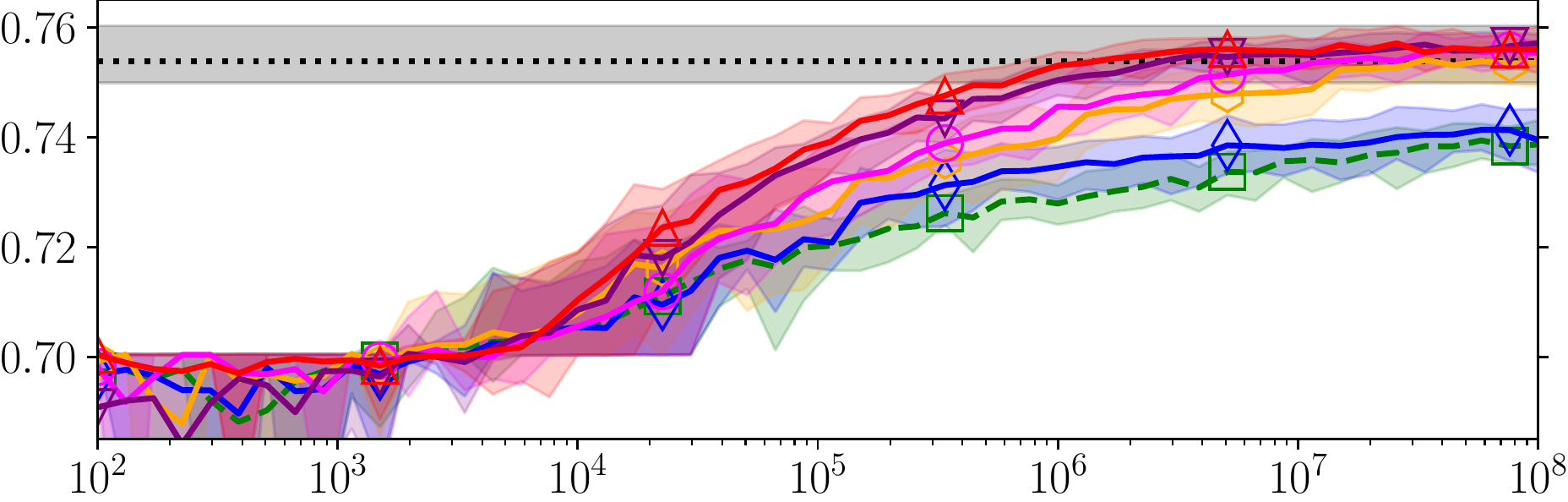}\\
\rotatebox[origin=lt]{90}{\footnotesize \hspace{0.6em} Logging Policy NDCG} &
\includegraphics[scale=0.44]{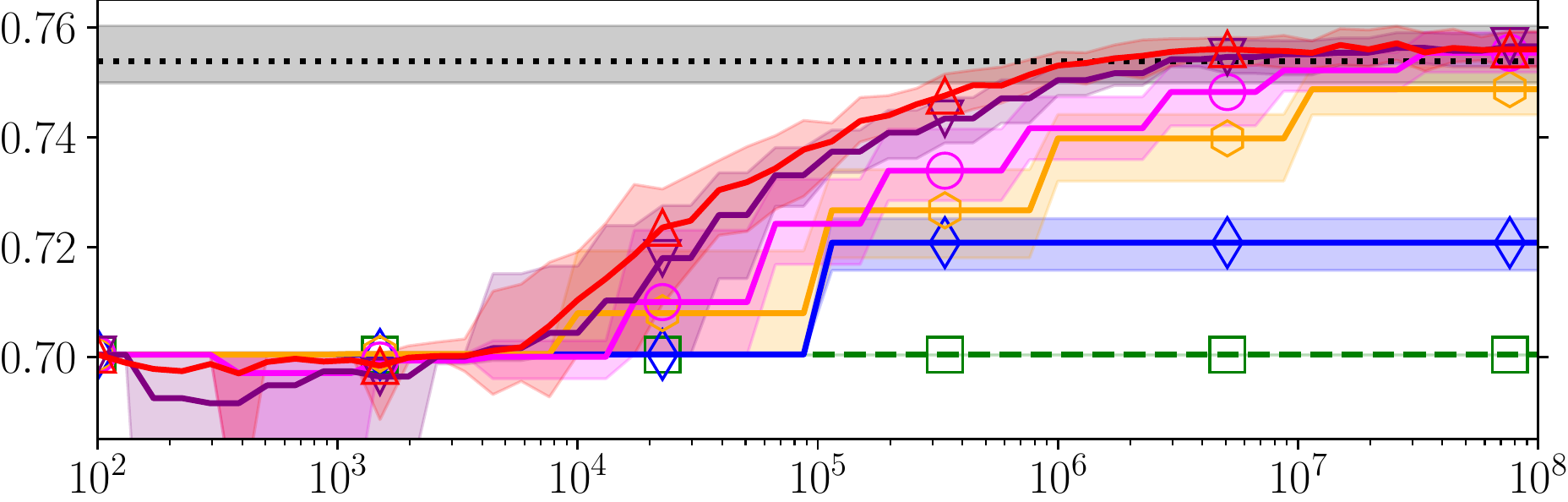}
\\
& \footnotesize \hspace{1em} Number of Logged Queries
\end{tabular}
}
\includegraphics[scale=0.365]{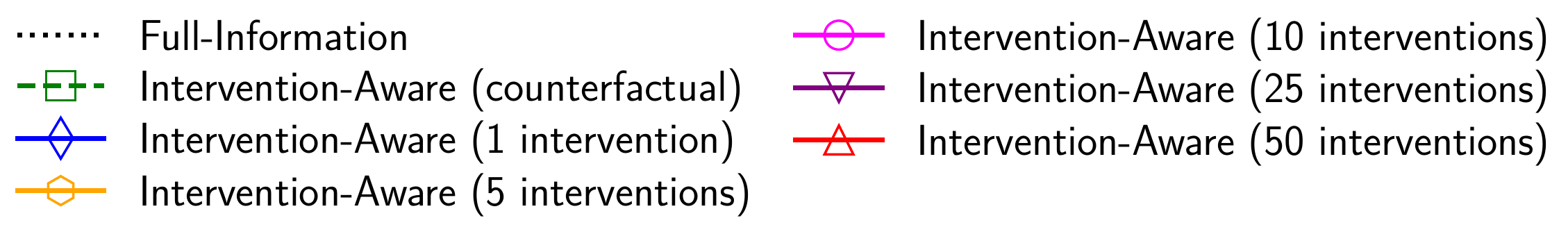}
\caption{
Effect of online interventions on \ac{LTR} with the intervention-aware estimator.
Results based on an average of 20 runs, shaded area indicates the 90\% confidence bounds.
}
\label{fig:interventions}
\end{figure}

\subsection{Effect of Interventions}

To better understand how much the intervention-aware estimator benefits from online interventions, we compared its performance under varying numbers of interventions in Figure~\ref{fig:interventions}.
It shows both the performance of the resulting model when training from the logged data (top), as the performance of the logging policy which reveals when interventions take place (bottom).
When comparing both graphs, we see that interventions lead to noticeable immediate improvements in data-efficiency.
For instance, when only 5 interventions take place the intervention-aware estimator needs more than 20 times the amount of data to reach optimal performance as with 50 interventions. 
Despite these speedups there are no large increases in variance.
From these observations, we conclude that the intervention-aware estimator can effectively and reliably utilize the effect of online interventions for optimization, leading to enormous increases in data-efficiency.

\begin{figure}[t]
\centering
{\renewcommand{\arraystretch}{0.2}
\begin{tabular}{r c}
\rotatebox[origin=lt]{90}{\footnotesize \hspace{4.3em} NDCG} &
\includegraphics[scale=0.44]{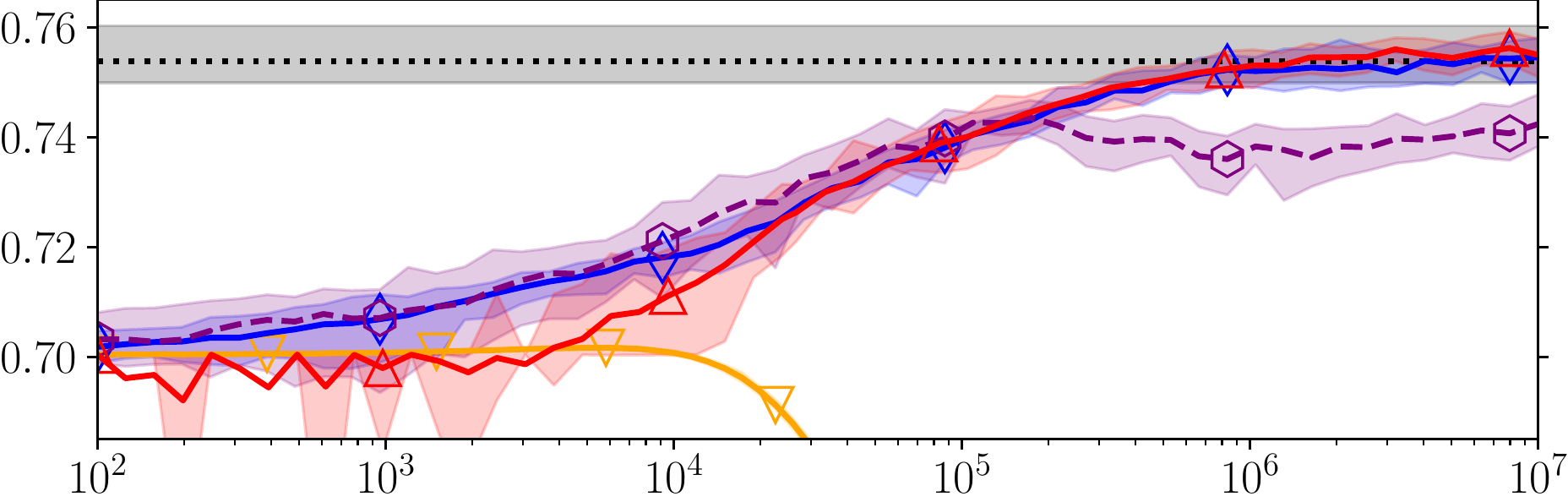}\\
\rotatebox[origin=lt]{90}{\footnotesize \hspace{0.6em} Logging Policy NDCG} &
\includegraphics[scale=0.44]{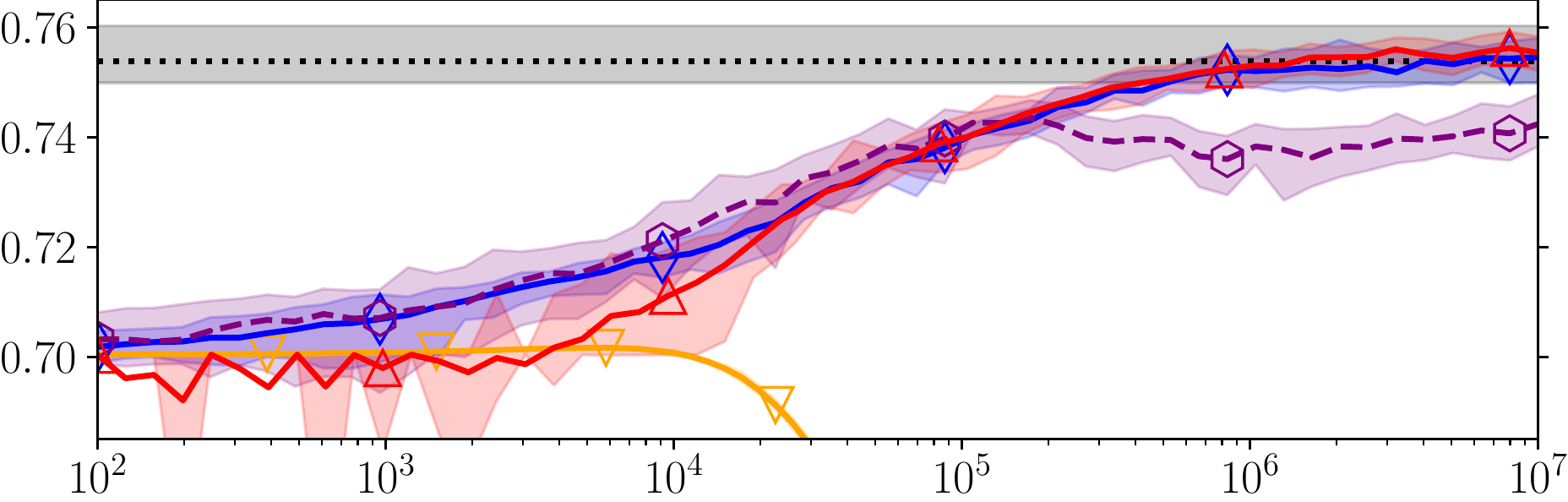}
\\
& \footnotesize \hspace{1em} Number of Logged Queries
\end{tabular}
}
\includegraphics[scale=0.38]{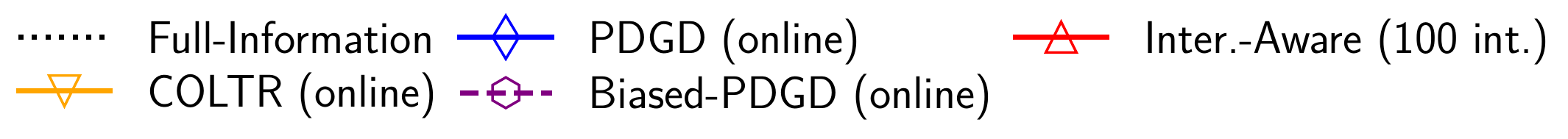} 
\caption{
Comparison with online \ac{LTR} methods.
Results based on an average of 20 runs, shaded area indicates the 90\% confidence bounds.
}
\label{fig:pdgdcomp}
\end{figure}

\subsection{Comparison with Online \ac{LTR}}

In order to answer the second research question: \emph{whether the inter\-vention-aware estimator leads to performance comparable with existing online \ac{LTR} methods},
we consider Figure~\ref{fig:pdgdcomp} which displays the performance of two online \ac{LTR} methods: \ac{PDGD} and \ac{COLTR} and the intervention-aware estimator with 100 online interventions.

First, we notice that \ac{COLTR} is unable to outperform its initial policy, moreover, we see its performance drop as the number of iterations increase.
We were unable to find hyper-parameters for \ac{COLTR} where this did not occur.
It seems likely that \ac{COLTR} is unable to deal with trust-bias, thus causing this poor performance.
However, we note that \citet{zhuang2020counterfactual} already show \ac{COLTR} performs poorly when no bias or noise is present, suggesting that it is perhaps an unstable method overall.

Second, we see that the difference between \ac{PDGD} and the inter\-vention-aware estimator becomes negligible after $2\cdot10^4$ queries.
Despite \ac{PDGD} running fully online, and the intervention-aware estimator only performing 100 interventions in total.
We do note that \ac{PDGD} initially outperforms the intervention-aware estimator, thus it appears that \ac{PDGD} works better with low numbers of interactions.
Additionally, we should also consider the difference in overhead: while \ac{PDGD} requires an infrastructure that allows for fully online learning, the intervention-aware estimator only requires 100 moments of intervention, yet has comparable performance after a short initial period.
By comparing Figure~\ref{fig:pdgdcomp} to Figure~\ref{fig:estimators}, we see that the intervention-aware estimator is the first counterfactual \ac{LTR} estimator that leads to stable performance while being comparably efficient with online \ac{LTR} methods.

Thus we answer the second research question positively:
besides an initial period of lower performance, the intervention-aware estimator has comparable performance to online \ac{LTR} and only requires 100 online interventions to do so.
To the best of our knowledge, it is the first counterfactual \ac{LTR} method that can achieve this feat.

\subsection{Understanding the Effectiveness of \ac{PDGD}}
Now that we concluded that the intervention-aware estimator reaches performance comparable to \ac{PDGD} when enough online interventions take place, the opposite question seems equally interesting:
\emph{Does \ac{PDGD} applied counterfactually provide performance comparable to existing counterfactual \ac{LTR} methods?}

To answer this question, we ran \ac{PDGD} in a counterfactual way following \citet{ai2020unbiased}, both fully counterfactual and with only 100 interventions.
The results of these runs are displayed in Figure~\ref{fig:pdgdcount}.
Quite surprisingly, the performance of \ac{PDGD} ran counterfactually and with 100 interventions, reaches much higher performance than the intervention-aware estimator without interventions.
However, after a peak in performance around $10^6$ queries, the \ac{PDGD} performance starts to drop.
This drop cannot be attributed to overfitting, since online \ac{PDGD} does not show the same behavior.
Therefore, we must conclude that \ac{PDGD} is biased when not ran fully online.
This conclusion does not contradict the existing theory, since \citet{oosterhuis2018differentiable} only proved it is unbiased w.r.t.\ \emph{pairwise} preferences.
In other words, \ac{PDGD} is not proven to unbiasedly optimize a ranking metric, thus also not proven to converge on the optimal model.
This drop is particularly unsettling because \ac{PDGD} is a continuous learning algorithm: there is no known early stopping method for \ac{PDGD}.
Yet these results show there is a great risk in running \ac{PDGD} for too many iterations if it is not applied fully online.
To answer our \ac{PDGD} question: although \ac{PDGD} reaches high performance when run counterfactually and appears to have great data-efficiency initially, it appears to converge at a suboptimal biased model.
Thus we cannot conclude \ac{PDGD} is a reliable method for counterfactual \ac{LTR}.

To better understand \ac{PDGD}, we removed its debiasing weights resulting in the performance shown in Figure~\ref{fig:pdgdcomp} (Biased-PDGD).
Clearly, \ac{PDGD} needs these weights to reach optimal performance. 
Similarly, from Figure~\ref{fig:pdgdcount} we see it also needs to be run fully online.
This makes the choice between the intervention-aware estimator and \ac{PDGD} complicated: on the one hand, \ac{PDGD} does not require us to know the $\alpha$ and $\beta$ parameters, unlike the intervention-aware estimator; furthermore, \ac{PDGD} has better initial data-efficiency even when not run fully online.
On the other hand, there are no theoretical guarantees for the convergence of \ac{PDGD}, and we have observed that not running it fully online can lead to large drops in performance.
It seems the choice ultimately depends on what guarantees a practitioner prefers.

\begin{figure}[t]
\centering
{\renewcommand{\arraystretch}{0.2}
\begin{tabular}{r c}
\rotatebox[origin=lt]{90}{\footnotesize \hspace{4.3em} NDCG} &
\includegraphics[scale=0.44]{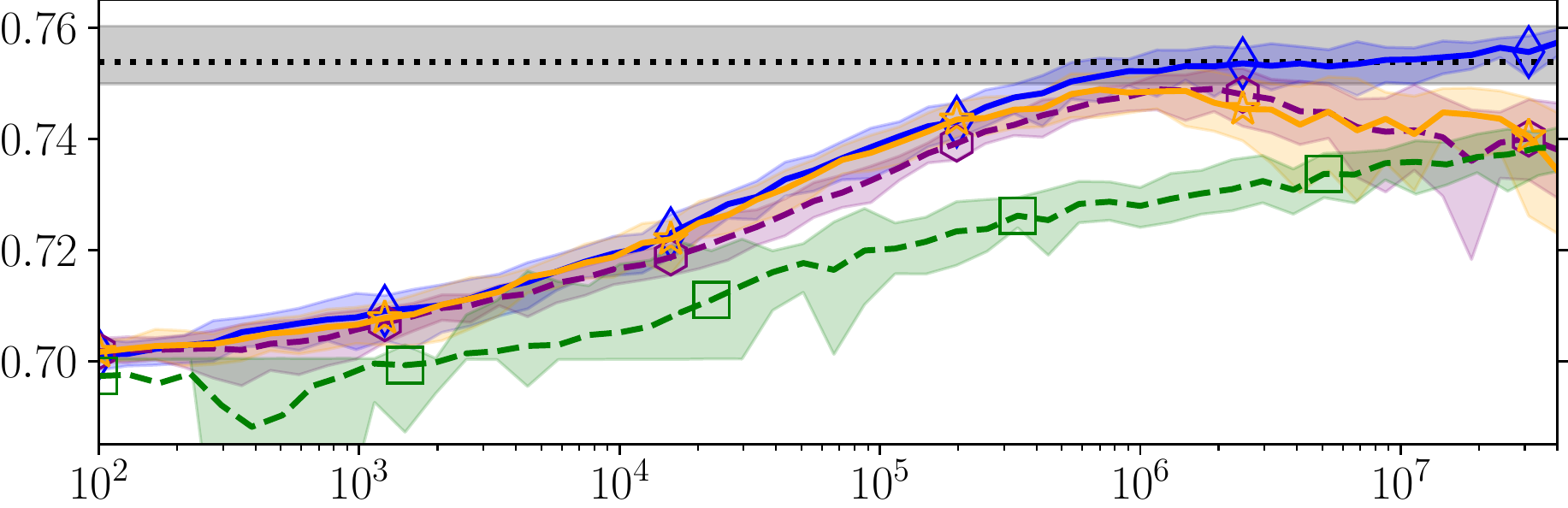}\\
\rotatebox[origin=lt]{90}{\footnotesize \hspace{0.6em} Logging Policy NDCG} &
\includegraphics[scale=0.44]{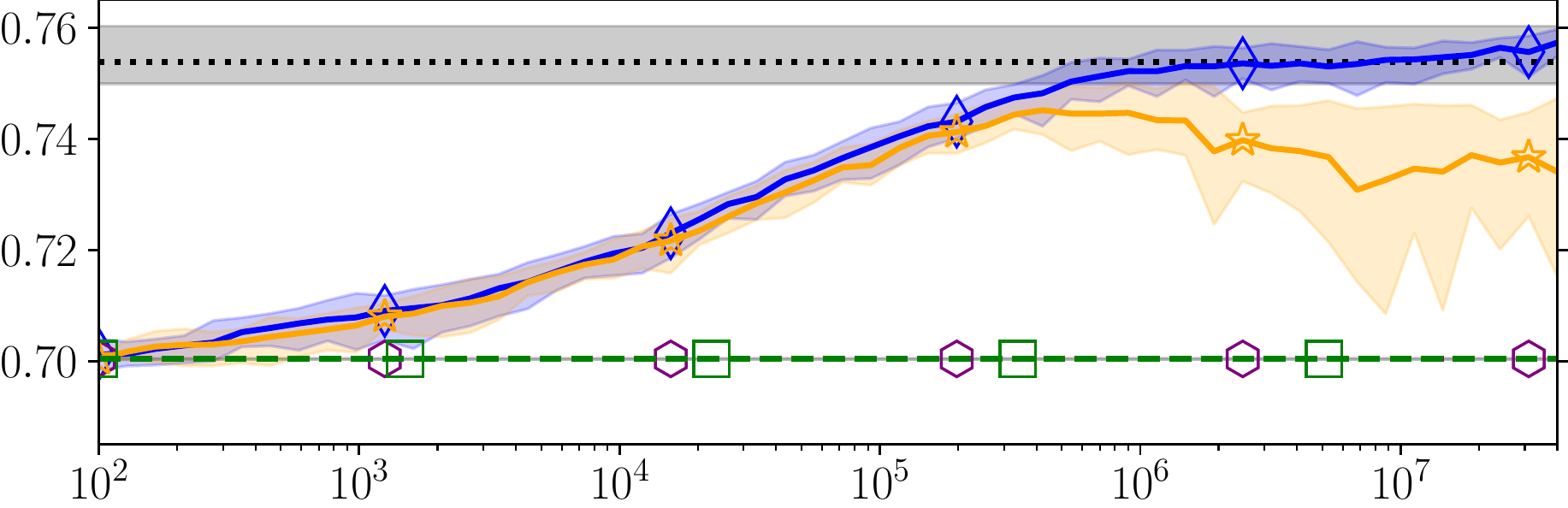}
\\
& \footnotesize \hspace{1em} Number of Logged Queries
\end{tabular}
}
\includegraphics[scale=0.35]{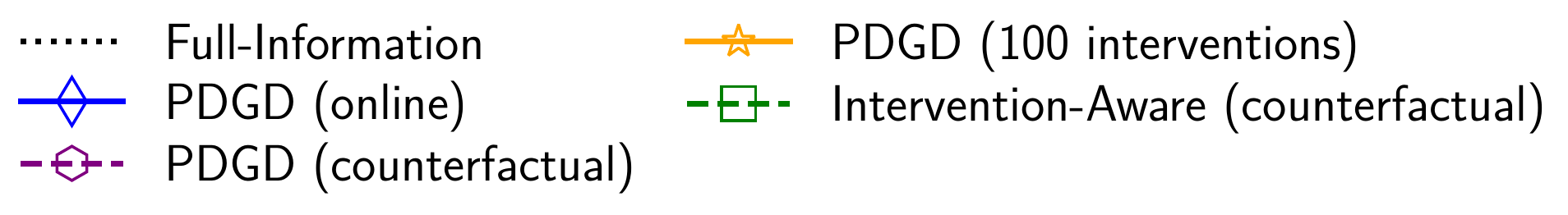} 
\caption{
Effect of online interventions on \ac{PDGD}.
Results based on an average of 20 runs, shaded area indicates the 90\% confidence bounds.
}
\label{fig:pdgdcount}
\end{figure}

\section{Conclusion}

In this paper, we have introduced an intervention-aware estimator: an extension of existing counterfactual approaches that corrects for position-bias, trust-bias, and item-selection bias, while also considering the effect of online interventions.
Our results show that the intervention-aware estimator outperforms existing counterfactual \ac{LTR} estimators, and greatly benefits from online interventions in terms of data-efficiency.
With only 100 interventions it is able to reach performance comparable to state-of-the-art online \ac{LTR} methods.

With the introduction of the intervention-aware estimator, we hope to further unify the fields of online \ac{LTR} and counterfactual \ac{LTR} as it appears to be the most reliable method for both settings.
Future work could investigate what kind of interventions work best for the intervention-aware estimator.

\begin{acks}
This research was partially supported by the Netherlands Organisation for Scientific Research (NWO) under project nr 612.001.551 and by the Innovation Center for AI (ICAI).
All content represents the opinion of the authors, which is not necessarily shared or endorsed by their respective employers and/or sponsors.
\end{acks}

\section*{Code and data}
To facilitate the reproducibility of the reported results, this work only made use of publicly available data and our experimental implementation is publicly available at \url{https://github.com/HarrieO/2021wsdm-unifying-LTR}.

\bibliographystyle{ACM-Reference-Format}
\bibliography{thesis}

\appendix

\end{document}